\definecolor{linkcolor}{rgb}{0, 0.25, 0.75}
\pgfplotsset{compat=1.17}
\crefname{theorem}{Theorem}{Theorems}
\crefname{corollary}{Corollary}{Corollaries}
\crefname{lemma}{Lemma}{Lemmas}
\crefname{claim}{Claim}{Claims}
\crefname{observation}{Observation}{Observations}
\crefname{proposition}{Proposition}{Propositions}
\crefname{property}{Property}{Properties}
\crefname{proplisti}{Property}{Properties}
\crefname{conjecture}{Conjecture}{Conjectures}
\crefname{remark}{Remark}{Remarks}
\crefname{Invariant}{Invariant}{Invariants}
\crefname{problem}{Problem}{Problems}
\crefname{definition}{Definition}{Definitions}
\crefname{algorithm}{Algorithm}{Algorithms}
\crefname{lp}{}{}
\crefname{figure}{Figure}{Figures}
\spnewtheorem{Invariant}{Invariant}{\bfseries}{\itshape}
\newlist{proplist}{enumerate}{1} 
\setlist[proplist]{label=\arabic*),ref=\theproperty.\arabic*}
\setlist[enumerate,1]{label={\arabic*)}}
\newlist{casesp}{enumerate}{3} 
\setlist[casesp]{align=left, 
                 listparindent=\parindent, 
                 parsep=\parskip, 
                 font=\normalfont\bfseries, 
                 leftmargin=0pt, 
                 labelwidth=0pt, 
                 itemindent=.4em,labelsep=.4em, 
                 partopsep=0pt, 
                 }
\setlist[casesp,1]{label=Case~\arabic*:,ref=\arabic*}
\setlist[casesp,2]{label=Case~\thecasespi.\roman*:,ref=\thecasespi.\roman*}
\setlist[casesp,3]{label=Case~\thecasespii.\alph*:,ref=\thecasespii.\alph*}
\newcommand{\maxwbf}{\textsf{MW$b$-F}}
\newcommand{\maxwbm}{\textsf{MW$b$-M}}
\newcommand{\maxcm}{\textsf{Max Cardinality Matching}}
\newcommand{\maxwm}{\textsf{Max Weight Matching}}
\newcommand{\mwm}{\textsf{MWM}}
\newcommand{\mwbm}{\textsf{MW$b$-M}}
\newcommand{\assign}{\textsf{Assignment}}
\newcommand{\transport}{\textsf{Transportation}}
\newcommand{\bfactor}{$b$-factor}
\newcommand{\bmatching}{$b$-matching}
\newcommand{\bfactors}{$b$-factors}
\newcommand{\bmatchings}{$b$-matchings}
\newcommand{\bfunc}{$b \colon V \to \mathbb{Z}_+$}
\newcommand{\bv}[1][v]{\ensuremath{b(#1)}}
\newcommand{\lrp}[1]{\left( #1 \right)}
\newcommand{\lrc}[1]{\left\{ #1 \right\}}
\newcommand{\lrv}[1]{\left\langle #1 \right\rangle}
\newcommand{\abs}[1]{\left\lvert #1 \right\rvert}
\newcommand{\floor}[1]{{\lfloor {#1} \rfloor}}
\newcommand{\R}{\ensuremath{\mathbb{R}}}
\DeclareMathOperator*{\argmin}{arg\,min}
\newcommand{\bigO}[2][]{\ensuremath{O_{#1}(#2)}}
\newcommand{\bigOlog}[2][]{\ensuremath{\tilde{O_{#1}}(#2)}}
\newcommand{\bigTheta}[2][]{\ensuremath{\Theta_{#1}(#2)}}
\newcommand{\f}[2][f]{\ensuremath{#1 \! \lrp{#2}}}
\newcommand{\degree}[2][]{\f[\mathrm{deg}_{#1}]{#2}}
\def\sse{\subseteq}
\def\sm{\setminus}
\newcommand{\AlgIn}{\Statex \textbf{Input:} }
\newcommand{\AlgOut}{\Statex \textbf{Output:} }
\newcommand{\crefdefpart}[2]{%
    \hyperref[#2]{\namecref{#1}~\labelcref*{#1}.\ref*{#2}}%
}
\newcommand{\leqnomode}{\tagsleft@true}
\newcommand{\reqnomode}{\tagsleft@false}
\newcommand{\specificthanks}[1]{\@fnsymbol{#1}}
  \def\doi#1{\url{https://doi.org/#1}}}
\begin{document}
\title{Approximate Bipartite $b$-Matching using Multiplicative Auction\thanks{The manuscript is accepted as a refereed paper in the 2024 INFORMS Optimization Society conference.}}
%
%

\author{Bhargav Samineni\inst{1}
\and
S M Ferdous\inst{2}
\and Mahantesh Halappanavar\inst{2}
\and
Bala Krishnamoorthy\inst{3}
}

\authorrunning{B. Samineni et al.}
%
\institute{
The University of Texas at Austin, \email{sbharg@utexas.edu}
\and
Pacific Northwest National Laboratory, \email{\{sm.ferdous,hala\}@pnnl.gov} 
\and
Washington State University, \email{kbala@wsu.edu}
}

\maketitle

\begin{abstract}
Given a bipartite graph $G(V= (A \cup B),E)$ with $n$ vertices and $m$ edges and a function $b \colon V \to \mathbb{Z}_+$, a \emph{$b$-matching} is a subset of edges such that every vertex $v \in V$ is incident to at most $b(v)$ edges in the subset. When we are also given edge weights, the \textsf{Max Weight $b$-Matching} problem is to find a $b$-matching of maximum weight, which is a fundamental combinatorial optimization problem with many applications. Extending on the recent work of Zheng and Henzinger (IPCO, 2023) on standard bipartite matching problems, we develop a simple \emph{auction} algorithm to approximately solve \textsf{Max Weight $b$-Matching}. Specifically, we present a multiplicative auction algorithm that gives a $(1 - \varepsilon)$-approximation in $O(m \varepsilon^{-1} \log \varepsilon^{-1} \log \beta)$ worst case time, where $\beta$ the maximum $b$-value. Although this is a $\log \beta$ factor greater than the current best approximation algorithm by Huang and Pettie (Algorithmica, 2022), it is considerably simpler to present, analyze, and implement. 

\keywords{$b$-Matching \and Auctions \and Approximation Algorithms}
\end{abstract}

\section{Introduction}
\emph{Matching}, also known as the {\em assignment} problem, in a bipartite graph is one of the most fundamental discrete optimization problems, which is rich in theory, algorithms, and applications. 
A weighted matching (also known as the linear sum assignment problem) in a bipartite graph aims to find a pairing of vertices between the partitions with the maximum sum of edge weights in the matching. There are many classic applications of maximum weighted bipartite matchings (\mwm{}) in resource allocation and assignments~\cite{daglib/0022248,lovasz2009matching}, and we also observe many emerging applications~\cite{olschowka1996new,siammax/DuffK01,azad2020distributed,mehta2013online,deng2017silkmoth,zeakis2022tokenjoin,abeywickrama2021optimizing,simonovsky2018graphvae}.

Classic \emph{exact} algorithms for \mwm{} include the celebrated primal-dual Hungarian algorithm~\cite{kuhn1955hungarian,munkres1957algorithms}, which is expensive and has little to no parallelism. Alternative \emph{approximate} approaches include a class of algorithms called \emph{auction} algorithms~\cite{bertsekas1979distributed,bertsekas_new_1981,demange1986multi}, which treat the matching process as a welfare-maximizing allocation of \emph{objects} to \emph{bidders}, where objects and bidders are vertices in the bipartitions of the graph. Auction algorithms assign prices to the objects and let \emph{eligible} bidders bid on objects with the maximum utility. 
The bidder outbids any previous bids for the object making the previous bidder eligible again. By formulating the bid values in a certain manner, the auction eventually terminates at an equilibrium where all the bidders are \emph{approximately} happy with the object they win. Auction based approaches are easier to present, analyze, and implement as they only involve 
performing a series of simple local updates with good empirical performance~\cite{alfaro2022assignment, ramshaw2012minimum, sathe2012auction}. Often, the runtime complexity of the classic auction algorithms is pseudo-polynomial as they depend on the maximum weight of the graph. There is a renewed interest in auction algorithms focusing on improving the runtime and developing algorithms in \emph{scalable computational models} such as distributed, streaming, and parallel models~\cite{assadi2021auction,ke2023scalable,zheng2023multiplicative}. In particular, a recent result of Zheng and Henzinger~\cite{zheng2023multiplicative} shows a multiplicative auction algorithm achieving a $(1-\varepsilon)$-approximate matching in $\bigO{m \varepsilon^{-1} \log \varepsilon^{-1}}$ time, which matches the more complicated state-of-the-art algorithm of Duan and Pettie~\cite{duan2014linear}.

In this paper, our focus is on the bipartite \bmatching{} problem, which generalizes a matching by allowing each vertex $v \in V$ to be incident to at most $b(v)$ edges in the matching. Weighted bipartite $b$-matchings (\maxwbm{}) are particularly suited for recommendation and assignment applications where multiple choices are preferred. This naturally occurs in modern applications like movie recommendations, route suggestions, and ad allocations. Consequently, \maxwbm{} has been used in  protein
structure alignment~\cite{krissinel2004secondary}, computer
vision~\cite{belongie2002shape}, estimating text similarity~\cite{pang2016text}, reviewers assignment in peer-review systems~\cite{charlin2013toronto,liu2014robust,tang2010expertise}, and diverse assignment~\cite{ahmed2017diverse,ahmadi2020diverse}. 
Although \maxwbm{} has been extensively explored in different algorithmic paradigms, little is known in terms of auction algorithms. 

To extend the auction paradigm to \bmatchings{}, we must deal with
the issue of objects being matched to multiple bidders. Indeed, this breaks the analogy of an auction, as it makes little sense for multiple people to win a single object. 
To deal with this, we make a simple modification: 
instead of allowing up to $b(j)$ bidders to bid on and win a single object $j$, 
we create a set $M(j)$ of $b(j)$ identical \emph{copies} of $j$ for bidders to bid on. 
Additionally, instead of an object $j$ explicitly maintaining a price, each copy of $j$ maintains its own price.
The auction now works by matching bidders to copies of an object, potentially outbidding the current match, and then updating the prices of the object copies. 

Using this modification and adapting the multiplicative auction algorithm of Zheng and Henzinger~\cite{zheng2023multiplicative}, we design an $(1-\varepsilon)$-approximate auction algorithm for \maxwbm{}. In particular, we interpret the auction process for \bmatching{} in a primal-dual linear programming framework. 
In \cref{sec:primal_dual}, we describe \emph{approximate} complementary slackness conditions and prove that any primal-dual variable pairs that obey them have a desired approximation guarantee. In \cref{sec:mult_auction}, we present a $(1-\varepsilon)$-approximation algorithm based on this analysis and the modifications discussed earlier by extending the recent multiplicative auction algorithm of Zheng and Henzinger \cite{zheng2023multiplicative} to \maxwbm{}. The worst case running time is shown as $\bigO{m \varepsilon^{-1} \log \varepsilon^{-1} \log \beta}$, which is a $\log \beta$ factor greater than the running time of the state-of-the-art approximation algorithm by Huang and Pettie \cite{huang2022approximate} when restricted to bipartite graphs. 
While our algorithm has a runtime dependence on $\beta$, it is reasonable to assume that $\beta = \bigO{1}$ for many real-world applications.
Also, the algorithm of Huang and Pettie \cite{huang2022approximate} is relatively complicated to analyze and implement even for bipartite graphs, as it is based on the scaling framework \cite{duan2014linear,gabow1985scaling}. In contrast, the multiplicative auction algorithm we present is considerably simpler in both regards.

\section{Preliminaries} \label{sec:prelims}
Let $G = \lrp{V = \lrp{A \cup B}, E, w}$ be a simple undirected bipartite graph with bipartitions $A$ and $B$, 
$n \coloneqq \abs{V}$ vertices, $m \coloneqq \abs{E}$ edges,
and weights $w \colon E \to \R_{\geq 0}$. 
For a vertex $v \in V$, denote by $\degree{v}$ and $N(v)$ the number of edges it is incident to and its set of neighbors in $G$, respectively.  
For a subset of edges $H \sse E$, let $\degree[H]{v}$ denote the number of edges in $H$ that $v$ is incident to. 
We define $\Delta \coloneqq \max_{v \in V} \degree{v}$. 
Given a function \bfunc{}, a \bmatching{} (also known as $f$-matching or degree-constrained subgraph)
is a subset of edges $F \sse E$ such that 
$\degree[F]{v} \leq \bv$ for all $v \in V$, where we can assume without loss of generality $1 \leq \bv \leq \degree{v}$.
We denote $\beta \coloneqq \max_{v \in V} \bv$. 
 A vertex $v$ is \emph{saturated} by $F$ if $\degree[F]{v} = \bv$, 
and it is \emph{unsaturated} 
by $F$ if $\degree[F]{v} < \bv$. 
Additionally, we let $F(v)$ denote the set of vertices $v$ is matched to under $F$.  
For a real-valued function $f$ defined on the elements of a set $Y$,
we use the standard summing notation $f(Y) \coloneqq \sum_{y \in Y} f(y)$. 
Without loss of generality, we assume $\bv[A] \leq \bv[B]$; thus the size of any \bmatching{} is at most $\bv[A] \leq \frac{\bv[V]}{2}$. 
The \textsf{Max Weight $b$-Matching} (\maxwbm{}) problem is to find a $b$-matching $F$ that maximizes $w(F)$ given a weighted bipartite graph $G = (V = \lrp{A \cup B}, E, w)$ and function \bfunc{} as input. 

\section{Related Work} \label{sec:prior_work}

The auction approach for \maxwm{} (\mwm{}) can be
attributed to Demange et al.~\cite{demange1986multi} and Bertsekas \cite{bertsekas1979distributed,bertsekas_new_1981}, 
who also extended it to the \assign{}, \transport{}, and general network flow problems \cite{bertsekas1988auction, bertsekas1992auction}. 
Recently, Assadi et al.~\cite{assadi2021auction} gave an auction algorithm for $(1-\varepsilon)$-approximate
\maxcm{} that yields algorithms in the semi-streaming~\cite{feigenbaum2005graph} and MPC~\cite{karloff2010model} models of computation. 
Liu et al.~\cite{ke2023scalable} extended on this work to develop auction algorithms for  $(1 - \varepsilon)$-approximate \mwm{} and \textsf{Max Cardinality $b$-Matching}
that work in various scalable models of computation. We note that the algorithm of Liu et al.~\cite{ke2023scalable} is the only auction approach for \bmatching{} that we are aware of, and it works only for the unweighted version of the problem. Additionally, Zheng and Henzinger \cite{zheng2023multiplicative} developed multiplicative auction algorithms 
to give a $\bigO{m \varepsilon^{-1} \log \varepsilon^{-1}}$ 
time auction algorithm for $(1 - \varepsilon)$-approximate \mwm{}.

Several algorithms for special cases of \mwbm{} exist.
For integral edge weights, Gabow and Tarjan \cite{gabow1989faster} 
developed exact scaling algorithms by reducing to finding a perfect \bmatching{} 
on a bipartite multigraph, while
Huang and Kavitha \cite{huang2017new} give an exact $\bigO{\bv[V]^{1/2} mW}$ time algorithm by decomposing into $W$ unweighted
subproblems.
In general graphs, finding a max weight \bmatching{} is also well studied.
Gabow \cite{gabow1983efficient} gave an exact $\bigO{\bv[V] \min \{m \log n, n^2\}}$ time algorithm.
Bayati et al.~\cite{bayati2011belief} proposed an exact algorithm based on belief propagation. Huang and Pettie \cite{huang2022approximate} presented a $\lrp{1-\varepsilon}$-approximate scaling algorithm with running time $\bigO{m \alpha(m, n) \varepsilon^{-1} \log \varepsilon^{-1}}$, 
where $\alpha(m, n)$ is the inverse Ackermann function. For bipartite graphs, the running time reduces to $\bigO{m \varepsilon^{-1} \log \varepsilon^{-1}}$. There are also several $\frac{1}{2}$ and $(\frac{2}{3}-\varepsilon)$-approximation algorithms designed for \bmatching{}~\cite{mestre2006greedy,khan2016efficient,pothen2019approximation} in the sequential and parallel models. 
Finding a max weight \bmatching{} can also be reduced to a standard matching problem \cite{huang2022approximate, shiloach1981another,ferdous2021algorithms}.
However, the reduction is not approximation preserving, and may drastically increase the size of the graph. 

\section{Primal-Dual Analysis for $b$-Matchings} \label{sec:primal_dual}

Given a graph $G = \lrp{V = \lrp{A \cup  B}, E, w}$, we refer to vertices in $A$ as \emph{bidders} and vertices in $B$ as \emph{objects}. 
For each edge $(i, j) \in E$, define an indicator variable $x(i,j) \in \lrc{0, 1}$.
The LP-Relaxation of \maxwbm{}
and its dual are then given by

\vspace{-2ex} 
\leqnomode
\noindent\begin{minipage}[t]{.45\linewidth}
    \footnotesize
    \begin{maxi*}
        {}{\sum_{(i, j) \in E} w(i,j) x(i,j)}{}{}{} 
        \addConstraint{\sum_{j \in N(i)} x(i,j)}{\leq \bv[i]}{\; \forall i \in A}
        \addConstraint{\sum_{i \in N(j)} x(i,j)}{\leq \bv[j]}{\; \forall j \in B} 
        \addConstraint{0 \leq x(i,j)}{\leq 1}{\; \forall \lrp{i,j} \in E}.
    \end{maxi*}%
\end{minipage}%
\hspace{-1em}
\begin{minipage}[t]{.59\linewidth}
    \footnotesize
    \begin{mini*}
        {}{\sum_{i \in A} \bv[i] \pi(i) + \sum_{j \in B} \bv[j] p(j) + \sum_{(i,j) \in E} z(i, j)}{}{}{} 
        \addConstraint{\pi(i) + p(j) + z(i, j)}{\geq w(i,j)}{\; \forall (i, j) \in E}
        \addConstraint{\pi(i)}{ \geq 0}{\; \forall i \in A}
        \addConstraint{p(j)}{ \geq 0}{\; \forall j\in B}
        \addConstraint{z(i, j)}{ \geq 0}{\; \forall \lrp{i,j} \in E}.
    \end{mini*}
\end{minipage}
\reqnomode

\vspace{1em}
The dual variables $z$ are defined for every edge, while the dual variables $\pi$ and $p$ are defined for vertices in $A$ and $B$, respectively. 
We refer to the dual $\pi(i)$ for a bidder $i$ as its \emph{profit} and the dual $p(j)$ for an object $j$ as its \emph{price}. 

\begin{property}[Complementary Slackness]
    \label{prop:comp_slack} 
    Let $x$ and $(\pi,p, z)$ be feasible primal and dual solutions,
    and let $F$ be the \bmatching{} induced by $x$. 
    Then these are \emph{optimal} solutions if and only if the following conditions hold:
    \begin{proplist}
        \item \label{comp_slack_prop1} $x(i,j) > 0 \implies \pi(i) + p(j) + z(i, j) = w(i,j)$ for all $(i, j) \in  E$.
        \item \label{comp_slack_prop2} $x(i,j) < 1 \implies z(i, j) = 0$ for all $(i, j) \in  E$.
        \item \label{comp_slack_prop3} $\sum_{j \in N(i)} x(i, j) < b(i) \implies \pi(i) = 0$ for all $i \in A$.
        \item \label{comp_slack_prop4} $\sum_{i \in N(j)} x(i, j) < b(j) \implies p(j) = 0$ for all $j \in B$.
    \end{proplist}
    The first two conditions can be restated as $\pi(i) + p(j) \leq w(i,j)$ if $(i, j) \in F$ and 
    $\pi(i) + p(j) \geq w(i,j)$ if $(i, j) \notin F$, respectively, while the last two conditions indicate 
    any unsaturated vertices have an optimal dual value of zero. 
\end{property} 
By \cref{comp_slack_prop1,comp_slack_prop2}, maintaining the edge duals $z$
is redundant as their minimizing value can be given explicitly by 
$z(i, j) = \max \lrc{w(i,j) - \pi(i) - p(j), 0}$. 
We can also partially restate the conditions for all $(i, j) \in F$ as
\begin{equation}
    w(i,j) - p(j) \geq \pi(i) \geq \max \lrc{ \max_{k \in N(i) \sm F(i)} \lrc{w(i,k) - p(k)}, 0 },
    \label{eq:opt-comp-slack}
\end{equation}
where $F(i)$ is the set of objects $i$ is matched to under $F$. The upper and lower bound on $\pi(i)$
follows from the slackness condition when $(i,j) \in F$ and when $(i,j) \notin F$, respectively, along with 
the non-negativity constraints of the dual variables. 
Additionally, owing to \cref{comp_slack_prop3}, if a bidder $i$ is unsaturated in an optimal solution, then 
$\pi(i) = 0$ which implies $\max_{k \in N(i) \sm F(i)} \lrc{w(i,k) - p(k)} \leq 0$. 
Hence, if we are given some arbitrary prices $p$, a feasible value of $\pi$ respecting 
\cref{prop:comp_slack} is implicitly given by the lower bound of \cref{eq:opt-comp-slack}.

This motivates the naive auction based approach, where both the primal and dual problems are 
simultaneously solved, but only the primal and price variables are maintained explicitly.  
Consider a set of matched edges $F$ and prices $p$. \cref{eq:opt-comp-slack} 
motivates the construction of $F$ and $p$ such that every edge $(i, j) \in F$ satisfies
\begin{equation*}    
    w(i,j) - p(j) \geq \max \lrc{ \max_{k \in N(i) \sm F(i)} \lrc{w(i,k) - p(k)}, 0 } \eqqcolon \pi(i). 
\end{equation*}
We call such an edge \emph{happy}. For a bidder $i$, if all edges $(i, j) \in F$ are happy and either $i$ is saturated by $F$ and $\pi(i) \geq 0$
or $i$ is unsaturated by $F$ and $\pi(i) = 0$, then we also call $i$ happy. 
If all bidders are happy under $F$, 
then $F$ is also happy. If $F$ is happy, a feasible \bmatching{}, and together with prices $p$ satisfies \cref{comp_slack_prop4} (i.e.~unsaturated objects have a dual value of zero),
then it must be optimal since it satisfies every condition of \cref{prop:comp_slack}.

\subsection{\texorpdfstring{$\varepsilon$}{\unichar{"03B5}}-Happiness}

We can relax the notion of happiness to an \emph{approximate} sense by maintaining some multiplicative slack $(1-\varepsilon)$ for some $\varepsilon > 0$.  
\begin{definition}[$\varepsilon$-Happiness]
    \label{def:epsilon_happy_edge_mult}
    For $F \sse E$ and prices $p$, 
    an edge $(i, j) \in F$ is $\varepsilon$-happy if   
    \begin{equation*}
        w(i,j) - p(j) \geq \max \lrc{ \max_{k \in N(i) \sm F(i)} \lrc{(1-\varepsilon)w(i,k) - p(k)}, 0 } \eqqcolon \pi(i).
    \end{equation*}
    A bidder $i$ is $\varepsilon$-happy if all edges $(i, j) \in F$ are $\varepsilon$-happy and either 
    $i$ is saturated by $F$ and $\pi(i) \geq 0$ \textbf{or} $i$ is unsaturated by $F$ and $\pi(i) = 0$. 
\end{definition}

Note that $F$ being $\varepsilon$-happy satisfies \cref{comp_slack_prop3}. 
If $F$ is $\varepsilon$-happy, a feasible \bmatching{}, and together with prices $p$ satisfies \cref{comp_slack_prop4}, 
then its weight falls within a $(1-\varepsilon)$ factor of the max weight \bmatching{}.  
To show this, we use a lemma from Huang and Pettie \cite{huang2022approximate} which follows from simple primal-dual arguments. 
We note that the original lemma is for \bmatchings{} in general graphs, but we specialize it for bipartite graphs and our notation here. 

\begin{lemma}[{\cite[Lemma 5]{huang2022approximate}}]
    Let $F$ be a \bmatching{} and $\pi$, $p$ the dual variables for vertices in $A$ and $B$, respectively. 
    If all unsaturated vertices have dual values of zero, each unmatched edge $(i, j) \notin F$ satisfies 
    $\pi(i) + p(j) \geq (1-\delta_1)w(i, j)$, and each matched edge $(i, j) \in F$ satisfies $\pi(i) + p(j) \leq (1+\delta_2)w(i, j)$, then 
    $w(F) \geq (1-\delta_1)(1 + \delta_2)^{-1}w(F^{\ast})$ where $F^{\ast}$ is a max weight \bmatching{}.  
    \label{lem:hp_mult_error}
\end{lemma}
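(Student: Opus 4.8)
The plan is to establish the bound by exhibiting an explicit \emph{feasible dual solution} whose objective value is at most $\frac{1+\delta_2}{1-\delta_1}w(F)$, and then to invoke weak LP duality against the primal point induced by $F^{\ast}$. Concretely, I would take the given vertex duals and rescale them by $\frac{1}{1-\delta_1}$, setting $\hat\pi(i) = \pi(i)/(1-\delta_1)$ for $i \in A$ and $\hat p(j) = p(j)/(1-\delta_1)$ for $j \in B$, and then define the edge duals at their minimizing value $\hat z(i,j) = \max\lrc{w(i,j) - \hat\pi(i) - \hat p(j),\, 0}$, exactly as in the remark following \cref{prop:comp_slack}. This triple is dual-feasible by construction: non-negativity is inherited from $\pi, p \geq 0$ together with $\delta_1 \in (0,1)$, and the covering constraint holds by the definition of $\hat z$. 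Hence, if $x^{\ast}$ is the $\lrc{0,1}$-indicator of $F^{\ast}$, which is a feasible primal point since $\degree[F^{\ast}]{v} \leq \bv$ for all $v$, weak duality yields $w(F^{\ast}) = \sum_{(i,j)} w(i,j) x^{\ast}(i,j) \leq \sum_{i \in A}\bv[i]\hat\pi(i) + \sum_{j\in B}\bv[j]\hat p(j) + \sum_{(i,j)\in E}\hat z(i,j) =: \hat D$.

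The next step is to simplify $\hat D$ using the three hypotheses on $F$. First, the unmatched-edge condition $\pi(i)+p(j) \geq (1-\delta_1)w(i,j)$ rescales to $\hat\pi(i)+\hat p(j) \geq w(i,j)$, which forces $\hat z(i,j) = 0$ on every edge $(i,j) \notin F$; thus only matched edges contribute to the $\hat z$-sum. Second, I would use that all unsaturated vertices have zero dual to convert the vertex terms into an edge sum: for every bidder $i \in A$ we have $\bv[i]\,\hat\pi(i) = \degree[F]{i}\,\hat\pi(i)$, since either $i$ is saturated and $\bv[i] = \degree[F]{i}$, or $\hat\pi(i) = 0$; the analogous identity holds for objects $j \in B$ with $\hat p$. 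Summing over $A$ and $B$ collapses $\sum_{i}\bv[i]\hat\pi(i) + \sum_{j}\bv[j]\hat p(j)$ into $\sum_{(i,j)\in F}\lrp{\hat\pi(i)+\hat p(j)}$.

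The crux, and the step I expect to be the main obstacle, is handling the residual edge duals $\hat z$ on the \emph{matched} edges: the hypotheses give no lower bound on $\pi(i)+p(j)$ there, and the crude estimate $\hat z(i,j) \leq w(i,j)$ only produces a useless factor-$2$ bound. The resolution I would use is the pointwise identity $a + \max\lrc{c-a,\,0} = \max\lrc{a,\,c}$: combining the collapsed vertex sum with the surviving matched-edge $\hat z$ terms, the \emph{total} contribution of each $(i,j) \in F$ to $\hat D$ becomes exactly $\max\lrc{w(i,j),\, \hat\pi(i)+\hat p(j)}$. The matched-edge condition $\pi(i)+p(j) \leq (1+\delta_2)w(i,j)$ then gives $\hat\pi(i)+\hat p(j) \leq \frac{1+\delta_2}{1-\delta_1}w(i,j)$, and since $\frac{1+\delta_2}{1-\delta_1} \geq 1$ this same factor also dominates $w(i,j)$ itself, so each term is at most $\frac{1+\delta_2}{1-\delta_1}w(i,j)$. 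Summing over $F$ yields $w(F^{\ast}) \leq \hat D \leq \frac{1+\delta_2}{1-\delta_1}w(F)$, which rearranges to the claim. I would close by noting that non-negativity of the weights and $\delta_1 \in (0,1)$ are the only global assumptions the argument invokes.
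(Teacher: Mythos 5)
Your proof is correct, but there is nothing in the paper to compare it against line by line: the paper does not prove this lemma. It imports it verbatim as Lemma~5 of Huang and Pettie~\cite{huang2022approximate}, remarking only that it \enquote{follows from simple primal-dual arguments} and that the statement is specialized from general graphs to the bipartite setting. Your argument therefore supplies a proof the paper leaves as a black box, and it is a clean one: rescaling the vertex duals by $(1-\delta_1)^{-1}$, setting $\hat z$ to its minimizing value, and invoking weak duality against the indicator of $F^{\ast}$ is exactly in the spirit of the framework of \cref{sec:primal_dual}. The step genuinely worth highlighting is your treatment of the matched edges via the identity $a + \max\lrc{c-a,\,0} = \max\lrc{a,\,c}$. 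This is where the naive direct charging argument stumbles: since the hypotheses give approximate domination only on \emph{unmatched} edges, the usual chain
\begin{equation*}
(1-\delta_1)\,w(F^{\ast}) \;\leq\; \sum_{(i,j)\in F^{\ast}} \lrp{\pi(i)+p(j)} \;\leq\; \sum_{i\in A}\bv[i]\,\pi(i) + \sum_{j \in B}\bv[j]\,p(j) \;=\; \sum_{(i,j)\in F} \lrp{\pi(i)+p(j)} \;\leq\; (1+\delta_2)\,w(F)
\end{equation*}
breaks at its first inequality on edges of $F^{\ast}\cap F$, for which no lower bound on $\pi(i)+p(j)$ is available. Your $\hat z$ terms absorb exactly this deficit: each matched edge contributes $\max\lrc{w(i,j),\,\hat\pi(i)+\hat p(j)}$ to the dual objective, and both arguments of the max are bounded by $\frac{1+\delta_2}{1-\delta_1}\,w(i,j)$, using $w \geq 0$. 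The implicit assumptions you invoke---$\pi, p \geq 0$, non-negative weights, and $\delta_1 < 1$---are all supplied by the paper's setup, so the proof stands as written and could serve as a self-contained bipartite replacement for the citation.
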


\begin{lemma}
    Let $F, p$ be some feasible \bmatching{} and prices, respectively, such that $F$ is $\varepsilon$-happy 
    and $F, p$ satisfy \cref{comp_slack_prop4}. Then $w(F) \geq (1-\varepsilon)w(F^{\ast})$.
    \label{lem:mult_loss_bmatching}
\end{lemma}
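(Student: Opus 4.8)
The plan is to reduce the statement to \cref{lem:hp_mult_error} by supplying the bidder duals $\pi$ explicitly and checking that its hypotheses hold with the parameter choice $\delta_1 = \varepsilon$ and $\delta_2 = 0$. With these parameters the conclusion of \cref{lem:hp_mult_error} reads $w(F) \geq (1-\varepsilon)(1+0)^{-1}w(F^{\ast}) = (1-\varepsilon)w(F^{\ast})$, which is exactly the desired bound. I take the prices $p$ as given and define $\pi(i)$ for each bidder $i \in A$ by the $\varepsilon$-happiness formula of \cref{def:epsilon_happy_edge_mult}, namely $\pi(i) = \max\lrc{\max_{k \in N(i) \sm F(i)}\lrc{(1-\varepsilon)w(i,k) - p(k)}, 0}$; note this is automatically nonnegative by construction.

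I would then verify the three hypotheses of \cref{lem:hp_mult_error} in turn. For the zero-dual condition on unsaturated vertices, an unsaturated object $j$ has $p(j) = 0$ directly by the assumed \cref{comp_slack_prop4}, while an unsaturated bidder $i$ has $\pi(i) = 0$ by the bidder clause of the $\varepsilon$-happiness definition, which is part of $F$ being $\varepsilon$-happy. For a matched edge $(i,j) \in F$, the $\varepsilon$-happy edge inequality $w(i,j) - p(j) \geq \pi(i)$ rearranges to $\pi(i) + p(j) \leq w(i,j)$, supplying the matched-edge hypothesis with $\delta_2 = 0$. For an unmatched edge $(i,j) \notin F$, the index $j$ lies in $N(i) \sm F(i)$ and hence appears in the inner maximum defining $\pi(i)$, so $\pi(i) \geq (1-\varepsilon)w(i,j) - p(j)$, i.e.\ $\pi(i) + p(j) \geq (1-\varepsilon)w(i,j)$, which is the unmatched-edge hypothesis with $\delta_1 = \varepsilon$.

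Once these are in place, \cref{lem:hp_mult_error} delivers the bound and the proof is complete. There is no genuinely hard step; the argument is essentially a matter of correctly instantiating the external lemma. The one point worth stating explicitly is that a single definition of $\pi(i)$ does double duty: the outer ``$\max$ with $0$'' together with the happiness inequality yields the upper bound on matched edges, while membership of $j$ in the maximizing set yields the lower bound on unmatched edges. Thus no dual construction beyond the one already encoded in $\varepsilon$-happiness is required, and the nonnegativity $\pi(i) \geq 0$ ensures the duals remain feasible throughout.
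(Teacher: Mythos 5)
Your proposal is correct and follows essentially the same route as the paper's own proof: both take $\pi(i)$ to be the quantity defined by $\varepsilon$-happiness, verify the matched-edge bound $\pi(i)+p(j)\leq w(i,j)$, the unmatched-edge bound $\pi(i)+p(j)\geq(1-\varepsilon)w(i,j)$ (via $j\in N(i)\sm F(i)$), and the zero-dual condition for unsaturated vertices, and then invoke \cref{lem:hp_mult_error} with $\delta_1=\varepsilon$, $\delta_2=0$. No gaps; the instantiation is exactly the paper's argument.
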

\begin{proof}
    For $(i, j) \in F$, we have that
    $
        \pi(i) + p(j) \leq w(i, j) - p(j) + p(j) = w(i, j)
    $
    since $(i, j)$ is $\varepsilon$-happy. For $(i, j) \notin F$, we have that  
    \begin{align*}
        \pi(i) + p(j) & = \max \lrc{\max_{k \in N(i) \sm F(i)} \lrc{(1-\varepsilon)w(i,k) - p(k)},0} + p(j)
        \\ &\geq (1-\varepsilon)w(i, j) - p(j) + p(j) = (1-\varepsilon)w(i, j)
    \end{align*}
    since $i$ is $\varepsilon$-happy and $j \in N(i) \sm F(i)$. 
    By \cref{comp_slack_prop4} and $F$ being $\varepsilon$-happy, the dual values 
    of unsaturated vertices are zero. 
    The lemma follows by applying \cref{lem:hp_mult_error} with $\delta_1 = \varepsilon$, $\delta_2 = 0$. 
    \qed
\end{proof}
\section{A Multiplicative Auction Algorithm} \label{sec:mult_auction}

The notion of $\varepsilon$-happiness gives us a framework 
to build an algorithm with approximation guarantees 
as long as we can maintain that matched edges are $\varepsilon$-happy. 
However, 
updating the prices of objects is non-trivial since an object may be matched to multiple bidders, and updating the prices carelessly may break $\varepsilon$-happiness for a matched edge. 
To resolve this, we introduce the idea of object copies. Specifically, we associate each object $j \in B$ with a 
set of object copies $M(j) = \{ c_1, \ldots, c_{b(j)} \}$, where each object copy $c \in M(j)$ maintains its own price
$p_j(c)$. When a bidder wants to match with an object $j$, it must choose a 
specific object copy $c \in M(j)$ to bid on and be assigned to. Note that an object copy can be assigned to at most one bidder. 

\begin{figure}[t]
    \centering
    \begin{subfigure}[b]{0.35\textwidth}
        \centering
        \includegraphics[scale=0.7250]{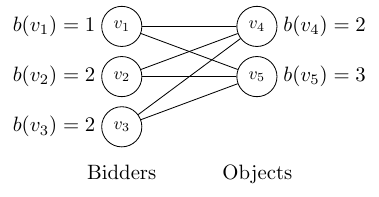}
        \caption{The original graph.}
    \end{subfigure}%
    \begin{subfigure}[b]{0.65\textwidth}
        \centering
        \includegraphics[scale=0.7250]{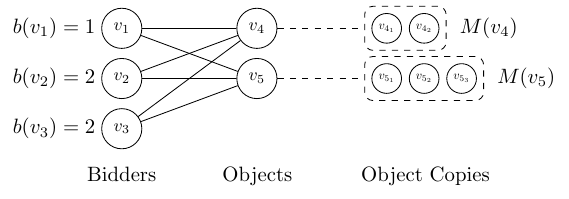}
        \caption{Adding copies of objects.}
    \end{subfigure}
    \caption{The set of copies associated with each object are given by dashed boxes. During the auction process,
    bidders bid on and increase the price of a specific copy of an object in order to match to it.}
    \label{fig:example}
\end{figure}

We visualize the role of multiple copies of an object in our algorithm in \cref{fig:example}.  
In this example, an eligible bidder, say $v_2$, would search through the copies of $v_4$ and $v_5$, 
and then bid on the ones that offer it the best utilities in each.
The other eligible bidders would work similarly, potentially outbidding and unmatching someone else on the same object copy.  
This process continues until an equilibrium is reached. 
This modification allows us to maintain the overall auction process described for matching problems, 
with the only changes being that bidders may now bid on multiple things and have to increase the price of a
specific copy of $j$ instead of $j$ directly to match to it. 
The addition of these object copies motivates a slight change to $\varepsilon$-happiness, which we call \emph{strong $\varepsilon$-happiness}. 

\begin{definition}[Strong $\varepsilon$-Happiness]
    For $F \sse E$ and prices for each object, an edge $(i, j) \in F$, where $i$ is assigned to 
    $c \in M(j)$, is strongly $\varepsilon$-happy if, 
    \begin{equation*}
        w(i,j) - p_j(c) \geq \max \lrc{ \max_{k \in N(i) \sm F(i),\; l \in M(k)} \lrc{(1-\varepsilon)w(i,k) - p_k(l)}, 0 } \eqqcolon \pi(i).
    \end{equation*}
\end{definition}

We can also show that strong $\varepsilon$-happiness implies $\varepsilon$-happiness if we set the price of an 
object $j$ as the minimum of the prices of its object copies. 

\begin{proposition}
    Let $F$ be strongly $\varepsilon$-happy. If we set $p(j) = \min_{c \in M(j)} \{p_j(c)\}$ as the price for each object $j \in B$, then 
    $F$ is $\varepsilon$-happy. 
    \label{prop:strong_e_hap_to_e_hap_mult}
\end{proposition}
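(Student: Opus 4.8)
The plan is to verify both parts of \cref{def:epsilon_happy_edge_mult} directly from the hypothesis, with the entire argument hinging on one observation: under the choice $p(j) = \min_{c \in M(j)} \lrc{p_j(c)}$, the quantity $\pi(i)$ appearing in the strong definition is \emph{identical} to the quantity $\pi(i)$ appearing in the (weak) definition of $\varepsilon$-happiness. I read ``$F$ is strongly $\varepsilon$-happy'' as the natural analogue of the weak hierarchy, i.e.\ every edge of $F$ is strongly $\varepsilon$-happy and each bidder satisfies the corresponding saturation condition with respect to the strong $\pi(i)$; this is what the surrounding text indicates.

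First I would establish the identity. For any fixed neighbor $k \in N(i) \sm F(i)$, the term $(1-\varepsilon)w(i,k)$ does not depend on the chosen copy, so maximizing over $l \in M(k)$ reduces to minimizing the subtracted price, giving $\max_{l \in M(k)} \lrc{(1-\varepsilon)w(i,k) - p_k(l)} = (1-\varepsilon)w(i,k) - \min_{l \in M(k)} p_k(l) = (1-\varepsilon)w(i,k) - p(k)$. Taking the maximum over $k$ (and with the common outer $0$) shows that the double maximum over pairs $(k,l)$ in the strong definition collapses exactly to the single maximum over $k$ in the weak definition. Hence the two values of $\pi(i)$ coincide, and I may write $\pi(i)$ for both without ambiguity.

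Next I would handle the edge condition. Fix $(i,j) \in F$ with $i$ assigned to copy $c \in M(j)$. Since $p(j) = \min_{c' \in M(j)} \lrc{p_j(c')} \leq p_j(c)$, we have $w(i,j) - p(j) \geq w(i,j) - p_j(c)$. Strong $\varepsilon$-happiness of the edge gives $w(i,j) - p_j(c) \geq \pi(i)$, and combining this with the identity above yields $w(i,j) - p(j) \geq \pi(i)$, which is precisely the condition for $(i,j)$ to be $\varepsilon$-happy under the prices $p$. Finally I would address the bidder-level requirement. Because the maximum defining $\pi(i)$ always includes the term $0$, we have $\pi(i) \geq 0$ unconditionally, so the requirement for a saturated bidder holds automatically; for an unsaturated bidder the requirement $\pi(i) = 0$ transfers verbatim from the strong notion, as the two $\pi(i)$ are the same number. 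Thus every strongly $\varepsilon$-happy bidder is $\varepsilon$-happy, and $F$ is $\varepsilon$-happy.

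All of the computation here is routine; the single conceptual step—and the only place one could slip—is the copy-to-price reduction $\max_{l \in M(k)} \lrc{(1-\varepsilon)w(i,k) - p_k(l)} = (1-\varepsilon)w(i,k) - p(k)$. This is exactly what forces $p(j)$ to be the \emph{minimum} copy price (rather than, say, the maximum or the average): only the minimum makes the best available utility over copies of $k$ agree with the weak utility $(1-\varepsilon)w(i,k) - p(k)$. Once that identity is in hand, the proposition is immediate.
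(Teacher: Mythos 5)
Your proof is correct and follows essentially the same route as the paper's: the chain $w(i,j) - p(j) \geq w(i,j) - p_j(c) \geq \pi(i)$ combined with the key identity $\max_{l \in M(k)} \lrc{(1-\varepsilon)w(i,k) - p_k(l)} = (1-\varepsilon)w(i,k) - p(k)$, which collapses the double maximum in the strong definition to the single maximum in the weak one. If anything, you are slightly more complete than the paper, which verifies only the edge condition and leaves the bidder-level saturated/unsaturated transfer implicit.
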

\begin{proof}
    Consider an edge $\lrp{i, j} \in F$, where $i$ is assigned to $c \in M(j)$.   
    Then, 
    \begin{align*}
        w(i, j) - p(j) \geq w(i, j) - p_j(c) &\geq \max \lrc{ \max_{k \in N(i) \sm F(i),\; l \in M(k)} \lrc{(1-\varepsilon)w(i,k) - p_k(l)}, 0 } \\
        &= \max \lrc{ \max_{k \in N(i) \sm F(i)} \lrc{(1-\varepsilon)w(i,k) - p(k)}, 0 }
    \end{align*}
    where the last equality follows from the fact that
    \begin{equation*}
    (1-\varepsilon)w(i, k) - p(k) = (1-\varepsilon)w(i, k) - \min_{l \in M(k)} {p_k(l)} = \max_{l \in M(k)} {(1-\varepsilon)w(i, k) - p_k(l),}
    \end{equation*}
    for all objects $k \in N(i)$. 
\end{proof}

We now describe our algorithm that adapts the multiplicative auction algorithm of
Zheng and Henzinger \cite{zheng2023multiplicative} to \bmatchings{}. 

\subsection{Weight Preprocessing} \label{subsec:weight}
We first pre-process the edge weights as described in \cite{zheng2023multiplicative}. If we are given an approximate parameter $\varepsilon^\prime \in (0, 1)$, we can afford to round and scale the edge weights since we are 
seeking a $(1-\varepsilon^\prime)$-approximate solution.
We remove any edge of weight less than $\frac{\varepsilon^\prime}{\bv[V]} W$  since including 
$\frac{1}{2} \bv[V]$ such edges in a \bmatching{}\footnote{Note that the size of any \bmatching{} is upper bounded by $\bv[A] \leq \frac{1}{2}\bv[V]$} 
would not even add $\frac{\varepsilon^\prime}{2} W \leq \frac{\varepsilon^\prime}{2} w(F^{\ast})$ to its weight.
Hence, we can scale the original weights by $\frac{1}{\varepsilon^\prime W} \bv[V]$ so that the maximum edge 
weight is $\frac{\bv[V]}{\varepsilon^\prime}$ and the minimum edge weight is at least $1$. We slightly abuse notation and denote these 
pre-processed edge weights with the function $w$. 
Next, we round down the weights to the nearest integer power of $(1+\varepsilon)$, where $\varepsilon = \frac{1}{2} \varepsilon^\prime$. 
To do this, we define $\textsc{iLog}(x) = \floor{\log_{1+\varepsilon}(x)}$ and the new weights as
$\tilde{w}(i, j) = (1 + \varepsilon)^{\textsc{iLog}(w(i, j))}$ for each edge $(i, j) \in E$.
We remark that finding a $(1-\varepsilon)$-approximate solution with respect to $\tilde{w}$ gives 
a $(1-\varepsilon^\prime)$-approximate solution with respect to $w$.

\begin{proposition}
    \label{prop:scaling}
    Let $w$ and $\tilde{w}$ be the weights before and after rounding. To find a 
    $(1-\varepsilon^\prime)$-approximate solution with respect to $w$, it suffices to find 
    a $(1-\varepsilon)$-approximate solution with respect to $\tilde{w}$. 
\end{proposition}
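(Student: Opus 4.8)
The plan is to compare the two weight functions edge by edge and then lift the comparison to whole b-matchings. Since $\tilde{w}$ rounds each weight down to the nearest integer power of $(1+\varepsilon)$, every edge obeys the two-sided bound
\begin{equation*}
    (1+\varepsilon)^{-1} w(i,j) \;\leq\; \tilde{w}(i,j) \;\leq\; w(i,j),
\end{equation*}
the upper inequality because we round down, and the lower inequality because the next larger power of $(1+\varepsilon)$ already exceeds $w(i,j)$. Summing over the edges of an arbitrary edge set $F$ then gives $(1+\varepsilon)^{-1} w(F) \leq \tilde{w}(F) \leq w(F)$, so on any fixed set the two objectives differ by at most a factor of $(1+\varepsilon)$.

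The second ingredient is the observation that rounding changes only the weights, not the graph $G$ nor the function $b$; hence the collection of feasible b-matchings is identical for the two instances. Let $F^{\ast}$ be an optimal b-matching for $w$ and $\tilde{F}^{\ast}$ one for $\tilde{w}$, and suppose $F$ is $(1-\varepsilon)$-approximate for $\tilde{w}$, i.e.\ $\tilde{w}(F) \geq (1-\varepsilon)\tilde{w}(\tilde{F}^{\ast})$. I would then chain
\begin{align*}
    w(F) \;\geq\; \tilde{w}(F) &\;\geq\; (1-\varepsilon)\,\tilde{w}(\tilde{F}^{\ast}) \;\geq\; (1-\varepsilon)\,\tilde{w}(F^{\ast}) \\
    &\;\geq\; (1-\varepsilon)(1+\varepsilon)^{-1}\, w(F^{\ast}),
\end{align*}
where the third step uses that $\tilde{F}^{\ast}$ is $\tilde{w}$-optimal over the same feasible region (so $\tilde{w}(\tilde{F}^{\ast}) \geq \tilde{w}(F^{\ast})$), and the final step applies the per-set lower bound to $F^{\ast}$.

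It remains to absorb the factor $(1-\varepsilon)(1+\varepsilon)^{-1}$ into the target ratio. Since $\varepsilon = \tfrac{1}{2}\varepsilon^\prime$, the elementary inequality
\begin{equation*}
    \frac{1-\varepsilon}{1+\varepsilon} \;=\; 1 - \frac{2\varepsilon}{1+\varepsilon} \;\geq\; 1 - 2\varepsilon \;=\; 1 - \varepsilon^\prime
\end{equation*}
yields $w(F) \geq (1-\varepsilon^\prime)\,w(F^{\ast})$, as claimed. I do not expect a genuine obstacle here: the only real content is this closing inequality together with the weight-independence of the feasible region, everything else being the edge-wise rounding bound, so the main thing to watch is keeping the direction of each rounding inequality straight. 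I would also remark that the separate loss from discarding low-weight edges is accounted for outside this proposition, as that comparison is between $w$ and the original weights rather than between $w$ and $\tilde{w}$.
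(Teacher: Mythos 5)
Your proof is correct and follows essentially the same route as the paper: the same edge-wise rounding bound $(1+\varepsilon)^{-1}w(i,j) \leq \tilde{w}(i,j) \leq w(i,j)$, the same chain $w(F) \geq \tilde{w}(F) \geq (1-\varepsilon)\tilde{w}(\tilde{F}^{\ast}) \geq (1-\varepsilon)\tilde{w}(F^{\ast}) \geq \frac{1-\varepsilon}{1+\varepsilon}w(F^{\ast}) \geq (1-2\varepsilon)w(F^{\ast})$, and the same substitution $\varepsilon = \varepsilon^\prime/2$. The only differences are cosmetic: you spell out the invariance of the feasible region and the elementary inequality $\frac{1-\varepsilon}{1+\varepsilon} \geq 1-2\varepsilon$, which the paper leaves implicit.
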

\begin{proof}
    Note that $(1+\varepsilon)^{-1}w(i, j) < \tilde{w}(i, j) \leq w(i, j)$ by definition of the rounding. 
    If $F^{\ast}$ and $\tilde{F}^{\ast}$ are optimal solutions with respect to $w$ and $\tilde{w}$ and 
    $F$ a $(1-\varepsilon)$-approximate solution with respect to $\tilde{w}$, then 
    \begin{align*}
        w(F) \geq \tilde{w}(F) \geq (1-\varepsilon) \tilde{w} (\tilde{F}^{\ast}) 
        \geq (1-\varepsilon) \tilde{w} (F^{\ast})
        > \frac{1-\varepsilon}{1+\varepsilon} w(F^{\ast})
        \geq (1 - 2\varepsilon) w(F^{\ast}). 
    \end{align*}
    Since $\varepsilon = \nicefrac{\varepsilon^{\prime}}{2}$, we get that
    $w(F) \geq (1 - \varepsilon^\prime)w(F^{\ast})$. 
    \qed
\end{proof}

We also take note of two important integers when rounding, 
$s_{\max} = \textsc{iLog}(W) = \textsc{iLog}(\bv[V]/2\varepsilon) = \bigO{\varepsilon^{-1} \log(\bv[V]\varepsilon^{-1})}$ and 
$s_{\min}$, the smallest integer such that $(1+\varepsilon)^{-s_{\min}} \leq \varepsilon$. A simple analysis shown in \cite{zheng2023multiplicative}
gives that $s_{\min} = \bigTheta{\varepsilon^{-1} \log \varepsilon^{-1}}$. 

\subsection{The Algorithm}
We present the pseudocode of our algorithm in~\cref{alg:mult_auction_bmatching}. For each object $j \in B$, we initialize its set of copies and set their prices to zero. 
For each bidder $i \in A$, we build a queue $Q_i$ that contains pairs of the form $(r, (i, j))$
for each $j \in N(i)$ and each integer \emph{index} $r_{ij} - s_{\min} \leq r \leq r_{ij} = \textsc{iLog}(w(i, j))$, 
where the pairs in $Q_i$ are ordered in non-increasing order based on the index $r$ starting from the top of the queue. 
To efficiently build the queues, we sort the pairs associated with all the edges using a global bucket sort (lines \ref{line:gbs_start} to \ref{line:gbs_end}).
With this, we can populate $Q_i$ for each $i \in A$ by going through the indices in decreasing order and inserting any 
relevant pairs (lines \ref{line:qstart} to \ref{line:qend}). We also maintain for each $i \in A$ an integer $r_i$ corresponding to the most recent popped index from $Q_i$. 

We maintain a list $I$ of bidders that are not strongly $\varepsilon$-happy. While this 
list is not empty, we remove a bidder $i$ from it and call the function $\textsc{AssignAndBid}(i)$. This function pops pairs from $Q_i$ until it saturates $i$ or empties $Q_i$ and along the way matches to certain objects. 
More accurately, suppose a pair $(r, (i, j))$ is popped. If $(i, j) \notin F$, then $i$ will attempt to match with the cheapest object copy $c \in M(j)$ given that $\tilde{w}(i, j) - p_j(c)$ is above a certain threshold. This threshold  guarantees that 
matching to $j$ is a \textquote{safe} choice, and we do not have to worry about a better choice coming up later in the queue. 
If $i$ is matched to $c$, we add the tuple $\lrv{j, c}$ to a temporary list $T$ and update the \bmatching{} $F$ to indicate this. 
If $c$ was previously assigned to some other bidder $y$, we remove the relevant edge from $F$ and add $y$ back to $I$ if $Q_y \neq \emptyset$ as it may not be strongly $\varepsilon$-happy. 
Otherwise, if $(i, j) \in F$, where $i$ is assigned to $c \in M(j)$, we add the tuple $\lrv{j, c}$ to $T$ if the value $\tilde{w}(i, j) - p_j(c)$ is also above a certain threshold. 
Once $i$ becomes saturated or $Q_i$ is empty, we calculate bids for all the tuples in $T$ based on the current $r_i$ value and  
perform a price update on all the chosen object copies. Finally, we can remove $i$ from $I$ as we can show it is strongly $\varepsilon$-happy.

\begin{algorithm*}[!t] 
    \caption{Multiplicative Auction} \label{alg:mult_auction_bmatching}

    \begin{minipage}[t]{0.46\linewidth}
    \begin{algorithmic}[1]
        \AlgIn $G = \lrp{V = \lrp{ A \cup B}, E}$, weights $\tilde{w} \colon E \to \R_{\geq 0}$, vertex capacities \bfunc{}, 
        and $\varepsilon \in (0, \frac{1}{2})$
        \AlgOut A set of edges $F$ such that $F$ is a strongly $\varepsilon$-happy \bmatching{}

        \State $I \gets A$, $F \gets \emptyset$
        \State $r_i \gets 0, Q_i \gets \emptyset$ for all $i \in A$ 
        \State $L_r \gets \emptyset$ for all $r$ from $s_{\max}$ to $-s_{\min}$ 
        \For{$j \in B$}
            \State $M(j) \gets \lrc{c_1, \ldots, c_{b(j)}}$  
            \State $p_j(c) \gets 0$ for all $c \in M(j)$
        \EndFor
        \LComment{\textbf{Initialization Phase}}
        \For{$(i, j) \in E$} \label{line:gbs_start}
            \State $r_{ij} \gets \textsc{iLog}(\tilde{w}(i, j))$
            \State $r_i \gets \max \{r_i, r_{ij}\}$ 
            \For{$r$ from $r_{ij}$ to $r_{ij} - s_{\min}$}
                \State $L_r \gets L_r \cup (r, (i, j))$
            \EndFor
        \EndFor \label{line:gbs_end}
        \For{$r$ from $s_{\max}$ to $-s_{\min}$} \label{line:qstart}
            \For{$(r, (i, j)) \in L_r$}
                \State $Q_i$.push($(r, (i, j))$)
            \EndFor
        \EndFor \label{line:qend}
        \LComment{\textbf{Auction Phase}}
        \While{$I \neq \emptyset$}
            \State Choose some $i \in I$
            \State $\Call{AssignAndBid}{i}$
        \EndWhile
        \State \Return $F$
        \end{algorithmic}
        \end{minipage}
        \hfill
        \begin{minipage}[t]{0.5\linewidth}
        \begin{algorithmic}[1]
            \setcounter{ALG@line}{20}
        \Procedure{AssignAndBid}{$i$} 
            \State $T \gets \emptyset$ \Comment{Temporary List}
            \While{$Q_i \neq \emptyset$ and $\abs{F(i)} < \bv[i]$}
                \State $\lrp{r, (i, j)} \gets Q_i$.pop(), $r_i \gets r$ 
                \If{$j \in F(i)$}
                    \State Let $c \in M(j)$ be the object copy $i$ is assigned to 
                    \If{$\tilde{w}(i, j) - p_j(c) \geq (1 + \varepsilon)^r$}
                        \State $T \gets T \cup \lrc{\lrv{j, c}}$
                    \EndIf
                \Else{}
                    \State $c \gets \argmin_{c^\prime \in M(j)} {p_j(c^\prime)}$ \label{line:obj_copy}
                    \If{$\tilde{w}(i, j) - p_j(c) \geq (1 + \varepsilon)^r$}
                        \State \Call{Match}{$i, \lrv{j, c}$}, $T \gets T \cup \{ \lrv{j, c} \}$
                    \EndIf
                \EndIf
            \EndWhile
            \For{$\lrv{j, c} \in T$}
                \State $\gamma_{i, c} \gets \tilde{w}(i, j) - p_j(c) - (1-\varepsilon)(1+\varepsilon)^{r_i+1}$
                \Comment{Bid value}
                \State $p_j(c) \gets p_j(c) + \gamma_{i, c}$ 
                \label{line:price_update}
            \EndFor
            \State $I \gets I \sm \lrc{i}$ 
        \EndProcedure

        \Procedure{Match}{$i, \lrv{j, c}$}
            \If{$c$ is already assigned to another bidder $y \neq i$} 
                \State $F \gets F \sm \lrc{\lrp{y, j}}$  
                \If{$Q_y \neq \emptyset$}
                    $I \gets I \cup \lrc{y}$
                \EndIf
            \EndIf
            \State $F \gets F \cup \{ (i, j) \}$
        \EndProcedure
    \end{algorithmic}
    \end{minipage}
\end{algorithm*}

\subsection{Invariants and Analysis}
Throughout the runtime of the algorithm, we maintain the following invariants. 
\begin{Invariant}
    Fix any $i \in A$. 
    For all $k \in N(i) \sm F(i)$ and $l \in M(k)$, $\tilde{w}(i, k) - p_k(l) < \max \{(1 + \varepsilon)^{r_i + 1}, (1 + \varepsilon)^{r_{ik} - s_{\min}}\}$. 
    \label{invariant:mult_util_ub}
\end{Invariant}
\begin{proof}
    This is true at the start since $F = \emptyset$, $r_i = \max_{j \in N(i)} \{ \textsc{iLog}(\tilde{w}(i, j)) \}$, and all object copy 
    prices are set to zero so 
    $\tilde{w}(i, j) - p_j(c) = \tilde{w}(i, j) < (1+\varepsilon)^{r_i + 1}$
    for all $j \in N(i)$ and $c \in M(j)$. We show in \cref{invariant:monotonic_price_increase} that throughout the algorithm the prices 
    of object copies are monotonically increasing. Thus, it suffices to show the inequality holds whenever $r_i$ changes. 
    Note that $r_i$ can only ever decrease. If $r_i$ changes to some value $r$, then we can guarantee there exists no pairs 
    $(r^\prime, (i, k))$ where $r^\prime \geq r+1 > r = r_i$ and $k \in N(i) \sm F(i)$ in $Q_i$ as such pairs must have been popped and discarded. 
    Additionally, we can guarantee that a lower bound on the indices of any pair popped for a specific $k$ is $r_{ik}-s_{\min}$. Thus, there exists 
    no $k \in N(i) \sm F(i)$ and $l \in M(k)$ with $\tilde{w}(i, k) - p_k(l) \geq \{(1 + \varepsilon)^{r_i + 1}, (1 + \varepsilon)^{r_{ik} - s_{\min}}\}$ by construction
    and hence the claim follows. 
    \qed
\end{proof}

\begin{Invariant}
    The prices of object copies are monotonically increasing. 
    \label{invariant:monotonic_price_increase}
\end{Invariant}
\begin{proof}
    Fix some bidder $i$. Suppose $(r, (i, j))$ was popped from $Q_i$ and $i$ chooses to add
    some tuple $\lrv{j, c}$ to $T$. By construction, $\tilde{w}(i, j) - p_j(c) \geq (1+\varepsilon)^r$. 
    Thus, 
    \begin{align*}
        \gamma_{i, c} = \tilde{w}(i, j) - p_j(c) - (1-\varepsilon)(1+\varepsilon)^{r_i+1} \geq (1+\varepsilon)^r - (1-\varepsilon^2)(1+\varepsilon)^{r_i} > 0
    \end{align*}
    where the last inequality follows from the fact that $r_i \leq r$.
    \qed
\end{proof}

\begin{Invariant}
    At the end of a call to $\Call{AssignAndBid}{i}$, $i$ is strongly $\varepsilon$-happy. 
    \label{invariant:mult_strong_eps_happ} 
\end{Invariant}
\begin{proof}
    Consider a tuple $\lrv{j, c} \in T$ during this iteration.  
    Right before we updated $p_j(c)$, \cref{invariant:mult_util_ub} implies that for all $k \in N(i) \sm F(i)$ and $l 
    \in M(k)$, $\tilde{w}(i, k) - p_k(l) < \max\{(1 + \varepsilon)^{r_i + 1}, (1 + \varepsilon)^{r_{ik} - s_{\min}}\}$. 
    Partition $N(i) \sm F(i)$ into the set of objects $L_1$ that are less than the first argument 
    of the max and the set $L_2$ that are less than the second argument. 
    For $k \in L_2$, we have that for all $l \in M(k)$
    \begin{equation*}
        \tilde{w}(i, k) - p_{k}(l) < (1+\varepsilon)^{r_{ik}-s_{\min}} \leq \varepsilon (1+\varepsilon)^{r_{ik}} = \varepsilon \tilde{w}(i, k),  
    \end{equation*}
    which implies $(1-\varepsilon)\tilde{w}(i, k) - p_{k}(l) < 0$.
    For $k \in L_1$, we have that for all $l \in M(k)$, 
    \begin{equation*}
        (1-\varepsilon)\tilde{w}(i, k) - p_{k}(l) < (1-\varepsilon)(\tilde{w}(i, k) - p_{k}(l)) < (1-\varepsilon)(1+\varepsilon)^{r_i+1}. 
    \end{equation*}
    By construction of the bid, we have that $\tilde{w}(i, j) - p_j(c) = (1-\varepsilon)(1+\varepsilon)^{r_i+1}$. Thus
    \begin{equation*}
        \tilde{w}(i, j) - p_j(c) \geq \max \lrc{ \max_{k \in N(i) \sm F(i), \; l \in M(k)} \{(1-\varepsilon)\tilde{w}(i, k) - p_k(l)\}, 0}
    \end{equation*}
    and the edge $(i, j)$ is strongly $\varepsilon$-happy. 

    Now consider each edge $(i, g)$, where $i$ is assigned to $h \in M(g)$, that was already included in $F$ at the start of the 
    call but $\lrv{g, h} \notin T$ at the end of the call. By the previous analysis, we know that when $i$ last increased $p_g(h)$, 
    the edge $(i, g)$ must have been strongly $\varepsilon$-happy. By \cref{invariant:monotonic_price_increase} we know that the prices 
    of object copies are monotonically increasing throughout the algorithm, so the edge $(i, g)$ must still be strongly $\varepsilon$-happy. 
    
    If at the end of the call $i$ is saturated, then $i$ must be strongly $\varepsilon$-happy. However, if $i$ is unsaturated 
    then it must be that $Q_i$ is empty. In this case for each $k \in N(i) \sm F(i)$ it must be that 
    the entry $(r_{ik}-s_{\min}, (i, k))$ must have been popped 
    from $Q_i$, which indicates that for all $l \in M(k)$,
    \begin{equation*}
        \tilde{w}(i, k) - p_k(l) < (1+\varepsilon)^{r_{ij}-s_{\min}} \leq \varepsilon (1+\varepsilon)^{r_{ij}} = \varepsilon \tilde{w}(i, k).
    \end{equation*}
    This gives $(1-\varepsilon)\tilde{w}(i, k) - p_k(l) < 0$, so
    $\pi(i) = 0$ and $i$ is still strongly $\varepsilon$-happy. 
    \qed
\end{proof}

By \cref{invariant:mult_strong_eps_happ}, when \cref{alg:mult_auction_bmatching} terminates, the 
\bmatching{} $F$ returned must be strongly $\varepsilon$-happy. Termination is guaranteed since for each bidder $i\in A$, $Q_i$ is 
a fixed length and in case it is emptied, $i$ must be strongly $\varepsilon$-happy.  
We can also guarantee that if an object $j \in B$ is unsaturated at termination, then $\min_{c \in M(j)} \lrc{p_j(c)} = 0$. 
If we set $p(j) = \min_{c \in M(j)} \lrc{p_j(c)}$, using \cref{prop:strong_e_hap_to_e_hap_mult}
and \cref{lem:mult_loss_bmatching}, $F$ must be a $(1-\varepsilon)$-approximate with respect to the weights
$\tilde{w}$. By \cref{prop:scaling}, $F$ is then $(1-\varepsilon^\prime)$-approximate with respect to
$w$. 

For runtime analysis, we divide the algorithm into two phases: initialization and auction. The initialization phase populates $Q_i$ for each $i \in A$, which requires bucket sorting $ms_{\min}$ pairs and
takes $\bigO{ms_{\min} + (s_{\max}+s_{\min})} = \bigO{m\varepsilon^{-1} \log \varepsilon^{-1}}$ time. The  auction phase 
involves calls to $\Call{AssignAndBid}{i}$ for each $i \in A$, which is dominated by the size of $Q_i$ and the time it takes 
to find a minimum price object copy and update its price. If we use a min priority queue to maintain an ordering of 
$M(j)$ for each $j \in B$, where the price $p_j(c)$ of an object copy $c \in M(j)$ is its priority, 
then these operations takes $\bigO{1}$ and $\bigO{\log \beta}$ time, respectively.
We may need to do a price update for each pair in $Q_i$, so the total amount of work done in all calls to 
$\Call{AssignAndBid}{i}$ is $\bigO{\degree{i}s_{\min}\log\beta}$. Summing over all bidders $i \in A$, the total amount of 
work done in the auction phase is $\bigO{m s_{\min}\log\beta} = \bigO{m\varepsilon^{-1} \log \varepsilon^{-1} \log \beta}$. 
Since the weight preprocessing described in \cref{subsec:weight} takes $\bigO{m}$ time, we have the following. 

\begin{theorem}
    There exists a multiplicative auction algorithm for \maxwbm{} that gives a $(1-\varepsilon)$-approximate solution in $\bigO{m \varepsilon^{-1} \log \varepsilon^{-1} \log \beta}$ time. 
\end{theorem}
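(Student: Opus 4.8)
The plan is to assemble the theorem from the pieces established above, splitting the argument into a correctness part and a runtime part. For correctness, I would first invoke \cref{invariant:mult_strong_eps_happ} to conclude that upon termination every bidder, and hence the returned \bmatching{} $F$, is strongly $\varepsilon$-happy with respect to $\tilde w$. The next step is to set $p(j) = \min_{c \in M(j)} p_j(c)$ for each object $j \in B$ and apply \cref{prop:strong_e_hap_to_e_hap_mult} to upgrade this to ordinary $\varepsilon$-happiness. I then need to verify \cref{comp_slack_prop4}: any object $j$ left unsaturated at termination must have a copy $c \in M(j)$ that was never bid on (each match consumes one copy and raises only its own price), so $p(j) = \min_c p_j(c) = 0$. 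With $\varepsilon$-happiness and \cref{comp_slack_prop4} in hand, \cref{lem:mult_loss_bmatching} yields $w(F) \ge (1-\varepsilon)\tilde w(F^\ast)$, i.e.\ $F$ is $(1-\varepsilon)$-approximate for $\tilde w$, and \cref{prop:scaling} then converts this into a $(1-\varepsilon^\prime)$-approximation for the original weights $w$.

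Before the runtime bound I would argue termination, since the algorithm repeatedly re-inserts outbid bidders into $I$. The key observation is that each queue $Q_i$ is fixed at initialization and consumed persistently across all calls to $\Call{AssignAndBid}{i}$: pops occur in non-increasing index order and $r_i$ only decreases, so every pair in $Q_i$ is popped at most once over the entire run. Since the queues are finite and, by \cref{invariant:monotonic_price_increase}, prices never decrease, the process must halt; moreover, whenever $Q_i$ empties, $i$ is strongly $\varepsilon$-happy and is removed from $I$ for good.

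For the runtime I would bound the two phases separately. Initialization performs a global bucket sort of the $\bigO{m s_{\min}}$ (index, edge) pairs over the range $[-s_{\min}, s_{\max}]$, costing $\bigO{m s_{\min} + s_{\max} + s_{\min}} = \bigO{m \varepsilon^{-1}\log\varepsilon^{-1}}$ after substituting $s_{\min} = \bigTheta{\varepsilon^{-1}\log\varepsilon^{-1}}$ and $s_{\max} = \bigO{\varepsilon^{-1} \log(\bv[V]\varepsilon^{-1})}$. For the auction phase, I would maintain each $M(j)$ as a min-priority queue keyed on copy prices, so finding the cheapest copy costs $\bigO{1}$ and a price update costs $\bigO{\log\beta}$. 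By the persistence observation, the total number of popped pairs, and hence price updates, summed over all calls for bidder $i$ is at most $\abs{Q_i} = \bigO{\degree{i}\, s_{\min}}$, giving $\bigO{\degree{i}\, s_{\min}\log\beta}$ work per bidder; summing over $A$ yields $\bigO{m s_{\min}\log\beta} = \bigO{m\varepsilon^{-1}\log\varepsilon^{-1}\log\beta}$. Adding the $\bigO{m}$ preprocessing establishes the claimed bound.

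The main obstacle I expect is the amortized runtime accounting rather than correctness, which mostly chains together already-proved lemmas. Specifically, because an outbid bidder $y$ is pushed back onto $I$ and later resumes work, a naive count could charge $y$ for a fresh traversal of $Q_y$ on every re-insertion; the whole bound hinges on establishing that $Q_y$ is consumed exactly once across the entire execution, so that re-insertions cost nothing beyond the single pass already accounted for.
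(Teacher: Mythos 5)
Your proposal is correct and follows essentially the same route as the paper: \cref{invariant:mult_strong_eps_happ} $\to$ \cref{prop:strong_e_hap_to_e_hap_mult} $\to$ \cref{lem:mult_loss_bmatching} $\to$ \cref{prop:scaling} for correctness, and the identical two-phase runtime accounting (bucket-sort initialization plus min-priority-queue auction phase) for the time bound. The two points you flagged as needing care---that an unsaturated object retains a never-assigned copy of price zero, and that each $Q_i$ is consumed only once across all re-insertions of $i$ into $I$---are exactly the facts the paper asserts without elaboration, and your justifications of them are valid.
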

\section{Conclusions} \label{sec:conclusions}

In this paper, 
we present a near-linear $(1-\varepsilon)$-approximate multiplicative auction algorithm for \mwbm{}. However, the algorithm's runtime has a dependence on $\beta$, the max $b$-value. While it is 
reasonable to assume $\beta = \bigO{1}$ in many practical applications, 
a direction for further research would be to remove this dependence on 
$\beta$ to have runtime parity with the algorithm of Huang and Pettie \cite{huang2022approximate}. 

\bibliographystyle{splncs04}
\bibliography{refs.bib} 

\end{document}